\begin{document}

\newcommand{\MinCut}[1]{\ensuremath{\mathit{MinCut}\left({#1}\right)}\xspace}
\newcommand{\DynMinCut}[1]{\ensuremath{\mathit{DynMinCut}\left({#1}\right)}\xspace}
\newcommand{\Cut}[1]{\ensuremath{\mathit{Cut}\left({#1}\right)}\xspace}
\newcommand{\Dyn}[1]{\ensuremath{\mathit{Dyn}\left({#1}\right)}\xspace}

\newtheorem{theorem}{Theorem}
\newtheorem{lemma}{Lemma}
\newtheorem{definition}{Definition}

\title{Reliable Communication in a Dynamic Network\\
in the Presence of Byzantine Faults}
\author{Alexandre Maurer$^1$,
S\'{e}bastien Tixeuil$^{2,3}$ and
Xavier Defago$^3$\\
$^1$ \'{E}cole Polytechnique F\'{e}d\'{e}rale de Lausanne\\
$^2$ Sorbonne Universit\'{e}s, UPMC Univ. Paris 06, LIP6 CNRS UMR 7606\\
$^3$ Institut Universitaire de France\\
$^4$ Japan Advanced Institute of Science and Technology (JAIST)\\
E-mail: Alexandre.Maurer@epfl.ch,
Sebastien.Tixeuil@lip6.fr, Defago@jaist.ac.jp}

\maketitle

\begin{abstract}
We consider the following problem: two nodes want to reliably communicate in a dynamic multihop network where some nodes have been compromised, and may have a totally arbitrary and unpredictable behavior. These nodes are called \emph{Byzantine}. We consider the two cases where cryptography is available and not available.

We prove the necessary and sufficient condition (that is, the weakest possible condition) to ensure reliable communication in this context. Our proof is constructive, as we provide Byzantine-resilient algorithms for reliable communication that are optimal with respect to our impossibility results.

In a second part, we investigate the impact of our conditions in three case studies: participants interacting in a conference, robots moving on a grid and agents in the subway. Our simulations indicate a clear benefit of using our algorithms for reliable communication in those contexts.
\end{abstract}

\section{Introduction}

As modern networks grow larger, their components become more likely to fail, sometimes in unforeseen ways. As opportunistic networks become more widespread, the lack of global control over individual participants makes those networks particularly vulnerable to attacks. Many failure and attack models have been proposed, but one of the most general is the \emph{Byzantine} model proposed by Lamport et al.~\cite{LSP82j}. The model assumes that faulty nodes can behave arbitrarily. In this paper, we study the problem of reliable communication in a multihop network despite the presence of Byzantine faults. The problem proves difficult since even a single Byzantine node, if not neutralized, can lie to the entire network.

\subsection*{Related works}

A common way to solve this problem is to use \emph{cryptography} \cite{CL99c,DFS05c}: the nodes use digital signatures to authenticate the sender across multiple hops.
However, cryptography \emph{per se} is not unconditionally reliable, as shown by the recent Heartbleed bug~\cite{hbleed} discovered in the widely deployed OpenSSL software. The \emph{defense in depth} paradigm~\cite{indepth} advocates the use of multiple layers of security controls, including non-cryptographic ones. For instance, if the cryptography-based security layer is compromised by a bug, a virus, or intentional tampering, a cryptography-free communication layer can be used to safely broadcast a patch or to update cryptographic keys. 
Thus, it is interesting to develop both cryptographic and
non-cryptographic strategies.

Following the setting of the seminal paper of Lamport et al.~\cite{LSP82j}, many subsequent papers focusing of Byzantine tolerance~\cite{AW98b,MMR03j,MRRS01c,MS03j} study agreement and reliable communication primitives using cryptography-free protocols in networks that are both \emph{static} and \emph{fully connected}. A recent exception to fully connected topologies in Byzantine agreement protocols is the recent work of Tseng, Vaidya and Liang~\cite{TV13,VTL12}, which considers specific classes of \emph{static} directed graphs (\emph{i.e.}, graphs with a particularly high clustering coefficient) and considers \emph{approximate} and \emph{iterative} versions of the agreement problem.

In general multihop networks, two notable classes of algorithms use some locality property to tolerate Byzantine faults: space-local and time-local algorithms. Space-local algorithms~\cite{MT07j,NA02c,SOM05c} try to contain the fault (or its effect) as close to its source as possible. This is useful for problems where information from remote nodes is unimportant (such as vertex coloring, link coloring, or dining philosophers). Time-local algorithms~\cite{DMT10ca,DMT10cd,DMT11j,DMT11cb,MT06cb} try to limit over time the effect of Byzantine faults. Time-local algorithms presented so far can tolerate the presence of at most a single Byzantine node, and are unable to mask the effect of Byzantine actions. Thus, neither approach is suitable to reliable communication.

In dense multihop networks, a first line of work assumes that there is a bound on the fraction of Byzantine nodes among the neighbors of each node. Protocols have been proposed for nodes organized on a grid \cite{BV05c,K04c} (but with much more than $4$ neighbors), and later generalized to other topologies \cite{PP05j}, with the assumption that each node knows the global topology. Since this approach requires all nodes to have a large degree, it may not be suitable for every multihop networks. The case of sparse networks was studied under the assumption that Byzantine failures occur uniformly at random~\cite{CtrZ,Scalbyz,Trig}, an assumption that holds, \emph{e.g.}, in structured overlay networks where the identifier (\emph{a.k.a.} position) of a new node joining the network is assigned randomly, but not necessarily in various actual communication networks.

Most related to our work is the line of research that assume the existence of $2k+1$ node-disjoint paths from source to destination, in order to provide reliable communication in the presence of up to $k$ Byzantine failure~\cite{D82j,NT09j,secure_mt}. The initial solution~\cite{D82j} assumes that each node is aware of the global network topology, but this hypothesis was dropped in subsequent work~\cite{NT09j,explorratum}.

None of the aforementioned papers considers genuinely dynamic networks, \emph{i.e.}, where the topology evolves while the protocol executes.

\begin{figure}
\begin{center}
\includegraphics[width=8cm]{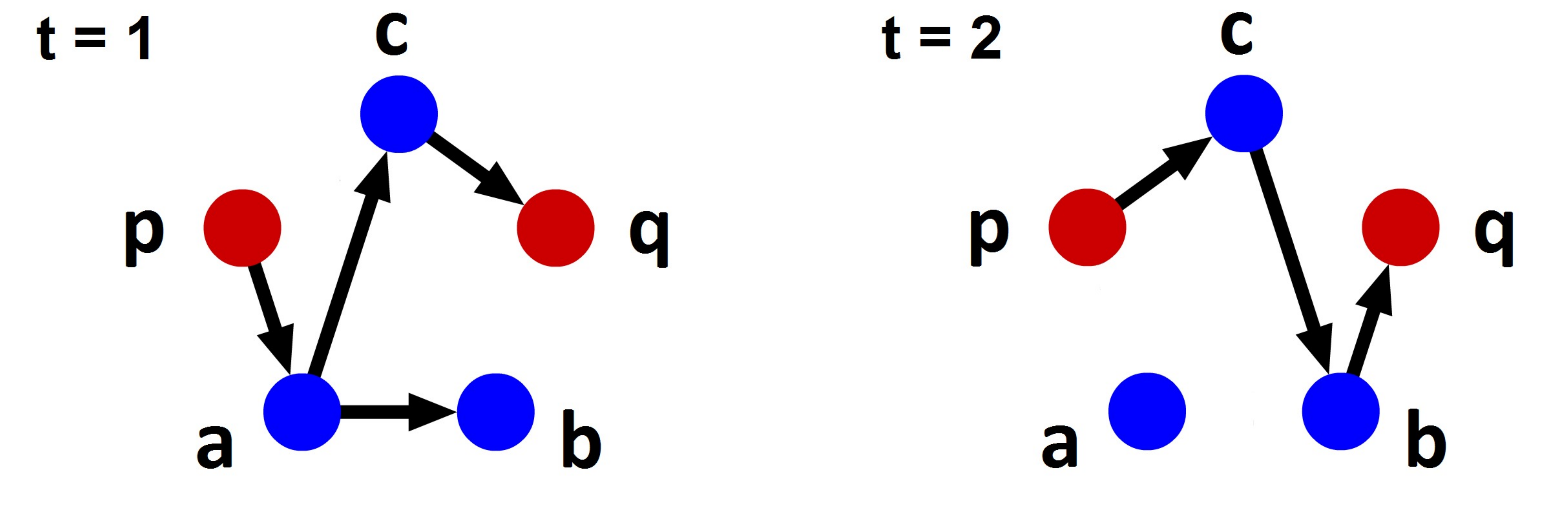}
\caption{Counterexample to Menger's theorem in dynamic graphs.} 
\label{fig:menger}
\end{center}
\end{figure}

\subsection*{Our contribution}

In this paper, our objective is to determine the condition for reliable communication in the presence of up to $k$ Byzantine failures in a \emph{dynamic} network, where the topology can vary with time. The proof technique used in \cite{D82j,NT09j,secure_mt} implicitly relies on Menger's theorem \cite{menger_thm}, which can be expressed as follows: there exists $x$ disjoint paths between two nodes $p$ and $q$ if and only if $x$ nodes must removed to disconnect
$p$ and $q$.

However, Menger's theorem does not generalize to dynamic networks~\cite{menger_fail}.
To illustrate this, let us consider the simple dynamic network of Figure~\ref{fig:menger}. This network is in two steps ($t = 1$ and $t = 2$). There exists three dynamic paths connecting $p$ to $q$: $(p,a,c,q)$, $(p,c,b,q)$ and $(p,a,b,q)$. To cut these three paths, at least two nodes must be removed: either $\{a,b\}$, $\{b,c\}$ or $\{a,c\}$. Yet, it is impossible to find two disjoint paths among the three dynamic paths.
Therefore, Menger's theorem cannot be used to prove the condition in dynamic networks.

In this paper, we prove the necessary and sufficient condition for reliable communication in dynamic networks, in the presence of up to $k$ Byzantine failures.
We consider the two cases where cryptography is available and not available. Our characterization is based on a dynamic version of a minimal cut between $p$ and $q$, denoted by DynMinCut($p,q$), that takes into account both the presence of particular paths and their duration with respect to the delay that is necessary to actually transmit a message over a path. Then condition is that DynMinCut($p,q$) is lower or equal to $2k$ (without cryptography) or $k$ (with cryptography). The proof is constructive, as we provide algorithms to prove the sufficiency of the condition.

In a second part, we apply these conditions to three case studies: participants interacting in a conference, robots moving on a grid and agents moving in the subway. We thus show the benefit of this multihop approach for reliable communication, instead of waiting that the source meets the sink directly (if this event is to occur).

\subsection*{Organization of the paper}

The paper is organized as follows.
In Section~\ref{sec_prelim}, we present the model and give basic definitions. In Section~\ref{sec_algo} (resp. \ref{sec_algo2}), we give the algorithm and prove the condition for the non-cryptographic (resp. cryptographic) case. We present the case studies in Section~\ref{sec_case}.

\section{Preliminaries}
\label{sec_prelim}


\subsection*{Network model}

We consider a continuous temporal domain $\mathbb{R}^{+}$ where dates are positive real numbers. We model the system as a time varying graph, as defined by Casteigts, Flocchini, Quattrociocchi and Santoro~\cite{dynam_model}, where vertices represent the processes and edges represent the communication links (or channels). A time varying graph is a dynamic graph represented by a tuple $\mathcal{G} = (V,E,\rho,\zeta)$ where:
\begin{itemize}
\item $V$ is the set of \emph{nodes}.
\item $E \subseteq V \times V$ is the set of \emph{edges}.
\item $\rho : E \times \mathbb{R}^{+} \rightarrow \{0,1\}$
	is the \emph{presence} function: $\rho(e,t) = 1$ indicates that edge~$e$ is present at date~$t$.
\item $\zeta : E \times \mathbb{R}^{+} \rightarrow \mathbb{R}^{+}$
	is the \emph{latency} function: $\zeta(e,t) = T$ indicates that a message sent at date $t$ takes $T$ time units to cross edge $e$.
\end{itemize}

The discrete time model is a special case, where time and latency are restricted to integer values.

\subsection*{Hypotheses}

We make the same hypotheses as previous work on the subject \cite{BV05c,D82j,K04c,CtrZ,Trig,Scalbyz,NT09j,PP05j}. First, each node has a unique identifier. Then, we assume \emph{authenticated channels} (or \emph{oral model}), that is, when a node $q$ receives a message through channel $(p,q)$, it knows the identity of $p$. Now, an omniscient adversary can select up to $k$ nodes as \emph{Byzantine}. These nodes can have a totally arbitrary and unpredictable behavior defined by the adversary (including tampering or dropping messages, or simply crashing). Finally, other nodes are \emph{correct} and behave as specified by the algorithm. Of course, correct nodes are unable to know \emph{a priori} which nodes are Byzantine. We also assume that a correct node $u$ is aware of its \emph{local topology} at any given date $t$ (that is, $u$ knows the set of nodes $v$ such that $\rho((u,v),t) = 1$).

\subsection*{Dynamicity-related definitions}

Informally, a \emph{dynamic path} is a sequence of nodes a message can traverse, with respect to network dynamicity and latency.

\begin{definition} [Dynamic path]
\label{def_dyn}
A sequence of distinct nodes $(u_1,\dots,u_n)$ is a \emph{dynamic path} from $u_1$ to $u_n$ if and only if there exists a sequence of dates $(t_1,\dots,t_{n})$ such that, $\forall i \in \{1,\dots,n-1\}$ we have:
\begin{itemize}
\item $e_i = (u_i,u_{i+1}) \in E$, i.e. there exists an edge connecting $u_i$ to $u_{i+1}$.
\item $\forall t \in [t_i, t_i + \zeta(e_i,t_i)]$, $\rho(e_i,t) = 1$, i.e. $u_i$ can send a message to $u_{i+1}$ at date $t_i$.
\item $\zeta(e_i,t_i) \leq t_{i+1} - t_i$, i.e. the aforementioned message is received by date $t_{i+1}$.
\end{itemize}
\end{definition}

We now define the \emph{dynamic minimal cut} between two nodes $p$ and $q$ as the minimal number of nodes (besides $p$ and $q$) one has to remove from the network to prevent the existence of a dynamic path between $p$ and $q$. Formally: 
\begin{itemize}
\item Let \Dyn{p,q} be the set of node sets $\{u_1,\dots,u_n\}$ 
  such that $(p,u_1,\dots,u_n,q)$ is a dynamic path.
\item For a set of node sets $\Omega = \{S_1,\dots,S_n\}$,
let \Cut{\Omega} be the set of node sets $C$ such that,
$\forall i \in \{1,\dots,n\}$,
$C \cap S_i \neq \emptyset$ ($C$ contains at least one node from each set $S_i$).
\item Let $\MinCut{\Omega} = \min_{C 
\Cut{\Omega}}
|C|$ (the size of the smallest element of \Cut{\Omega}). If $Cut(\Omega)$ is empty, we assume that $\MinCut{\Omega} = +\infty$.
\item Let $\DynMinCut{p,q} = \MinCut{\Dyn{p,q}}$.
\end{itemize}

\subsection*{Problem specification}

We say that a node \emph{multicasts} a message $m$ when it sends $m$ to all nodes in its current local topology. Now, a node $u$ \emph{accepts} a message $m$ from another node $v$ when it considers that $v$ is the author of this message. We now define our problem specification, that is, \emph{reliable} communication.

\begin{definition} [Reliable communication]
\label{def_rel}
Let $p$ and $q$ be two correct nodes. An algorithm ensures \emph{reliable communication}
from $p$ to $q$ when the following two conditions are satisfied:
\begin{itemize}
\item When $q$ accepts a message from $p$, $p$ is necessarily the author of this message.
\item When $p$ sends a message, $q$ eventually receives and accepts this message from $p$.
\end{itemize}

\end{definition}

\section{Non-cryptographic reliable communication}
\label{sec_algo}

In this section, we consider that cryptography is not available. We first provide a Byzantine-resilient multihop broadcast protocol. This algorithm is used as a constructive proof for the sufficient condition for reliable communication.
We then prove the necessary and sufficient condition for reliable communication.

\subsection*{Informal description of the algorithm}

Consider that each correct node $p$ wants to broadcast a message $m_0$ to the rest of the network. Let us first discuss why the naive flood-based solution fails. A naive first idea would be to send a tuple $(p,m_0)$ through all possible dynamic paths: thus, each node receiving $m_0$ knows that $p$ broadcast $m_0$. Yet, Byzantine nodes may forward false messages, \emph{e.g.}, a Byzantine node could forward the tuple $(p,m_1)$, with $m_1 \neq m_0$, to make the rest of the network believe that $p$ broadcast $m_1$.

To prevent correct nodes from accepting false message, we attach to each message the set of nodes that have been visited by this message since it was sent (that is, we use $(p,m,S)$, where $S$ is a set of nodes already visited by $m$ since $p$ sent it). 
As the Byzantine nodes can send any message, in particular, they can forward false tuples $(p,m,S)$.
Therefore, a correct node only accepts a message when it has been received through a collection of dynamic paths that cannot be cut by $k$ nodes (where $k$ is a parameter of the algorithm, and supposed to be an upper bound on the total number of Byzantine nodes in the network).

\subsection*{Variables}

Each correct node $u$ maintains the following variables:
\begin{itemize}
\item $u.m_0$, the message that $u$ wants to broadcast.
\item $u.\Omega$, a dynamic set registering all tuples $(s,m,S)$ received by $u$.
\item $u.Acc$, a dynamic set of confirmed tuples $(s,m)$.
We assume that whenever $(s,m) \in u.Acc$, $u$ accepts $m$ from $s$.
\end{itemize}
Initially,  $u.\Omega = \{(u,u.m_0,\o)\}$ and $u.Acc = \{(u,u.m_0)\}$.

\subsection*{Algorithm}

Each correct node $u$ obeys the three following rules:
\begin{enumerate}
\item Initially, and whenever $u.\Omega$ or the local topology of $u$ change: multicast $u.\Omega$.
\item Upon reception of $\Omega'$ through channel $(v,u)$:
$\forall (s,m,S) \in \Omega'$, if $v \notin S$ then append $(s,m,S \cup \{v\})$ to $u.\Omega$.
\item Whenever there exist $s$, $m$ and $\{S_1, \dots, S_n\}$ such that $\forall i \in \{1,\dots,n\}$, $(s,m,S_i \cup \{s\}) \in u.\Omega$
and $\MinCut{\{S_1,\dots,S_n\}} > k$: append $(s,m)$ to $u.Acc$.
\end{enumerate}

\subsection*{Condition for reliable communication}
\label{genproof}

Let us consider a given dynamic graph, and two given correct nodes $p$ and $q$. Our main result is as follows:

\begin{theorem}
\label{mainthm}
For a given dynamic graph, 
a $k$-Byzantine tolerant reliable communication from $p$ to $q$ is feasible if and only if $\DynMinCut{p,q} > 2k$.
\end{theorem}

\begin{proof}
The proof of the ``only if'' part is in Lemma~\ref{lemnec}. The proof of the ``if''
is in Lemma~\ref{lemsuf}.
\end{proof}

\begin{lemma}[Necessary condition]
\label{lemnec}
For a given dynamic graph,
let us suppose that there exists an algorithm ensuring
reliable communication from $p$ to $q$.
Then, we necessarily have $\DynMinCut{p,q} > 2k$. 
\end{lemma}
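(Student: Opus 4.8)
The plan is to argue by contradiction through an indistinguishability (scenario-confusion) argument, the standard route for Byzantine lower bounds. Suppose reliable communication from $p$ to $q$ is feasible, yet $\DynMinCut{p,q} \le 2k$. Since $\DynMinCut{p,q} = \MinCut{\Dyn{p,q}} \le 2k$, there is a node set $C \in \Cut{\Dyn{p,q}}$ with $|C| \le 2k$; that is, $C$ meets every dynamic path from $p$ to $q$, so deleting $C$ leaves no dynamic path between them. Because the nodes of $C$ are distinct from $p$ and $q$ and $|C| \le 2k$, I can partition $C = C_1 \cup C_2$ with $C_1 \cap C_2 = \emptyset$ and $|C_1|, |C_2| \le k$, so that either half is an admissible Byzantine set.

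Next I would fix two distinct messages $m_1 \neq m_2$ and build two executions of the (assumed) algorithm that $q$ cannot tell apart. In execution $\alpha_1$, node $p$ broadcasts $m_1$, the Byzantine set is $C_1$, and the faulty nodes of $C_1$ emit on each channel, at each date, exactly the messages that \emph{correct} nodes located at $C_1$ would emit in the companion execution $\alpha_2$. Symmetrically, in $\alpha_2$ node $p$ broadcasts $m_2$, the Byzantine set is $C_2$, and the faulty nodes of $C_2$ replay the messages that correct nodes at $C_2$ would emit in $\alpha_1$. Such behaviours are legitimate since Byzantine nodes act arbitrarily. The definition is a mutual coupling, and it is non-circular: the reference behaviour of the correct nodes of $C_1$ in $\alpha_2$ is driven entirely by the honest $p$-side of $\alpha_2$ (where $p$ sends $m_2$ and the only faulty nodes $C_2$ sit across the cut toward $q$), and symmetrically for $C_2$ in $\alpha_1$; the cut property guarantees these reference behaviours are fixed independently of the coupling.

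The heart of the argument is to show that $q$ receives exactly the same messages, on the same channels, at the same dates, in $\alpha_1$ and in $\alpha_2$. This is where the cut property is decisive: since no dynamic path from $p$ to $q$ avoids $C$, every message that can influence $q$ must be carried across $C$, so the ``$q$-side'' of the network only ever hears about $p$ through the nodes of $C_1$ and $C_2$. In $\alpha_1$, the traffic crossing toward $q$ is $m_1$-flavoured through the correct half $C_2$ and $m_2$-flavoured through the Byzantine half $C_1$; in $\alpha_2$ it is $m_2$-flavoured through the correct half $C_1$ and $m_1$-flavoured through the Byzantine half $C_2$. The coupling is designed precisely so that the aggregate of what crosses $C$ toward $q$ coincides. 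I would formalize this by induction on delivery dates (ordering events through the latency function $\zeta$ and respecting the presence function $\rho$), proving that the state and outgoing messages of every node not separated from $q$ by $C$ agree in $\alpha_1$ and $\alpha_2$.

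Finally I would derive the contradiction. In $\alpha_1$ the Byzantine set $C_1$ has size at most $k$ and both $p$ and $q$ are correct, so the liveness clause of reliable communication forces $q$ to eventually accept $m_1$ from $p$; likewise in $\alpha_2$ it must accept $m_2$. But $q$'s view is identical in the two executions, so $q$ performs the same acceptances in both, accepting (say) $m_1$ from $p$ in $\alpha_2$ as well. Since $p$ actually broadcast $m_2 \neq m_1$ in $\alpha_2$, this violates the safety clause (accepting a message of which $p$ is not the author), which is the desired contradiction. I expect the main obstacle to be the indistinguishability step itself in the dynamic setting: unlike the static case one cannot simply invoke ``the component of $q$ after removing $C$,'' so the coupling must be defined carefully with respect to $\rho$ and $\zeta$, and the time-induction must verify that the cut property genuinely blocks every flow of $p$'s true message except through $C$.
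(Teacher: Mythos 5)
Your proposal is correct and follows essentially the same route as the paper: take a cut $C$ of size at most $2k$ witnessing $\DynMinCut{p,q}\le 2k$, split it into two halves $C_1,C_2$ of size at most $k$, and run the standard two-scenario indistinguishability argument in which each half plays Byzantine in one execution while mimicking its correct behaviour from the other, forcing $q$ to accept a message $p$ never sent. Your treatment is in fact somewhat more careful than the paper's, which simply asserts that $q$'s view is determined by the behaviour of $C$ and that ``the behavior of the nodes of $C_1$ is unimportant,'' whereas you make the mutual coupling explicit and flag the need for a time-induction to rule out circularity.
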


\begin{proof}
Let us suppose the opposite: 
there exists an algorithm ensuring
reliable communication from $p$ to $q$, and yet, $\DynMinCut{p,q} \leq 2k$. Let us show that it leads to a contradiction.

As we have $\DynMinCut{p,q} = \MinCut{\Dyn{p,q}} \leq 2k$
and $\MinCut{\Dyn{p,q}} = \min_{C \in \Cut{\Dyn{p,q}}}$
$|C|$,
there exists an element $C$ of \Cut{\Dyn{p,q}} such that
$|C| \leq 2k$.
Let $C_1$ be a subset of $C$ containing $k'$ elements, with $k' = \min(k,|C|)$. Let $C_2 = C - C_1$. Thus, we have
$|C_1| \leq k$ and $|C_2| \leq k$.

According to the definition of \Cut{\Dyn{p,q}},
 $C$ contains a node of each possible dynamic path from $p$ to $q$. Therefore, the information that $q$ receives about $p$ are completely determined by the behavior of the nodes in $C$.
 
Let us consider two possible placements of Byzantine nodes, and show that they lead to a contradiction:

\begin{itemize}
\item First, suppose that all nodes in $C_1$ are Byzantine, and that all other nodes are correct. This is possible since  $|C_1| \leq k$.

Suppose now that $p$ broadcasts a message $m$.
Then, according to our hypothesis, since the algorithm ensures reliable communication, $q$ eventually accepts $m$ from $p$,
regardless of what the behavior of the nodes in $C_1$ may be.

\item Now, suppose that all nodes in $C_2$ are Byzantine, and that all other nodes are correct. This is also possible since $|C_2| \leq k$.

Then, suppose that $p$ broadcasts a message $m' \neq m$, and that the Byzantine nodes have exactly the same behavior
as the nodes of $C_2$ had in the previous case.

Thus, as the information that $q$ receives about $p$ is completely determined by the behavior of the nodes of $C$, from the point of view of $q$, this situation is indistinguishable from the previous one:
the nodes of $C_2$ have the same behavior, and the behavior of the nodes of $C_1$ is unimportant.
Thus, similarly to the previous case, $q$ eventually accepts $m$ from $p$.
\end{itemize}

Therefore, in the second situation, $p$ broadcasts $m$, and $q$ eventually accepts $m' \neq m$. Thus,
according to Definition~\ref{def_rel},
the algorithm does not ensure reliable communication,  which contradicts our initial hypothesis. Hence, the result.
\end{proof}

\begin{lemma}[Safety]
\label{lemsafe}
Let us suppose that all correct nodes follow our algorithm. If $(p,m) \in q.Acc$, then $m = p.m_0$.
\end{lemma}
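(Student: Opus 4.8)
The plan is to argue by contradiction: assume that $(p,m) \in q.Acc$ yet $m \neq p.m_0$, and derive a contradiction with the bound $|B| \le k$ on the number of Byzantine nodes $B$.

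The heart of the argument is a \emph{tracing} claim: for every correct node $u$, if $(p,m,S) \in u.\Omega$ with $m \neq p.m_0$, then $S$ contains at least one Byzantine node. I would prove this by induction on $|S|$. A correct node adds a tuple to its $\Omega$ only by initialization or by Rule~2. Initialization produces only $(u,u.m_0,\o)$, which would force $u = p$ and $m = p.m_0$, contradicting $m \neq p.m_0$; this also settles the base case $|S| = 0$, since Rule~2 always enlarges the set by the identity of the sender and hence produces $|S| \ge 1$. For the inductive step, if $(p,m,S)$ was appended by Rule~2, there is a sender $v$ with $v \in S$ from which $u$ received $(p,m,S\setminus\{v\})$. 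If $v$ is Byzantine we are done, since $v \in S$. If $v$ is correct, then a node following the algorithm only ever multicasts tuples of its own $\Omega$, so $(p,m,S\setminus\{v\}) \in v.\Omega$; since $m \neq p.m_0$ and $|S\setminus\{v\}| < |S|$, the induction hypothesis yields a Byzantine node in $S \setminus \{v\} \subseteq S$.

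With the claim in hand, I would finish as follows. Since $(p,m) \in q.Acc$, Rule~3 guarantees node sets $S_1,\dots,S_n$ with $(p,m,S_i \cup \{p\}) \in q.\Omega$ for every $i$ and $\MinCut{\{S_1,\dots,S_n\}} > k$. Applying the claim to each tuple $(p,m,S_i \cup \{p\})$ gives a Byzantine node in $S_i \cup \{p\}$; as $p$ is correct, that node lies in $S_i$ itself. Hence the Byzantine set $B$ meets every $S_i$, so $B \in \Cut{\{S_1,\dots,S_n\}}$ and therefore $\MinCut{\{S_1,\dots,S_n\}} \le |B| \le k$, contradicting $\MinCut{\{S_1,\dots,S_n\}} > k$. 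This contradiction forces $m = p.m_0$.

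The main obstacle I anticipate is making the tracing claim rigorous: the induction on $|S|$ must be coupled with the observation that a correct node only forwards tuples it genuinely holds (so that the recursion descends into $v.\Omega$ when $v$ is correct), and one must check that Rule~2 always strictly enlarges the visited set so that $|S|$ decreases along the chain of senders and the recursion is well-founded. Once these bookkeeping points are pinned down, the reduction to a cut of $\{S_1,\dots,S_n\}$ is immediate.
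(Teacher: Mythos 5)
Your proof is correct and rests on the same two ingredients as the paper's: the backward induction that traces a tuple $(p,m,S)$ through the chain of senders (using that a correct node only multicasts tuples in its own $\Omega$ and that Rule~2 strictly enlarges the visited set), and the observation that the Byzantine set would otherwise form a cut of $\{S_1,\dots,S_n\}$ of size at most $k$. You have merely arranged these contrapositively — the paper first extracts a Byzantine-free $S_i$ via the cut argument and then traces it to conclude $m = p.m_0$ directly, while you trace every tuple carrying a wrong message back to a Byzantine node and then invoke the cut argument for the contradiction — so the argument is essentially the same.
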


\begin{proof} As $(p,m) \in q.Acc$, according to rule $3$ of our algorithm, there exists $\{S_1, \dots, S_n\}$ such that, $\forall i \in \{1,\dots,n\}$, $(p,m,S_i \cup \{p\}) \in q.\Omega$, and $\MinCut{\{S_1,\dots,S_n\}} > k$.

Suppose that each node set $S \in \{S_1,\dots,S_n\}$ contains at least one Byzantine node. If $C$ is the set of Byzantine nodes, then $C \in \Cut{\{S_1,\dots,S_n\}}$ and $|C| \leq k$. This is impossible because $\MinCut{\{S_1,\dots,S_n\}} > k$. Therefore, there exists $S \in \{S_1,\dots,S_n\}$ such that $S$ does not contain any Byzantine node.

Now, let us use the correct dynamic path corresponding to $S$ to show that $m = m_0$. Let $n' = |S \cup \{p\}|$. Let us show the following property $\mathcal{P}_i$ by induction,
$\forall i \in \{0, \dots, n'\}$: there exists a correct node $u_i$ and a set of correct nodes $X_i$ such that $(p,m,X_i) \in u_i.\Omega$ and $|X_i| = |S \cup \{p\}| - i$.

\begin{itemize}

\item As $S \in \{S_1,\dots,S_n\}$, $(p,m,S \cup \{p\}) \in q.\Omega$. Thus, $\mathcal{P}_0$ is true if we take $u_0 = q$ and $X_0 = S \cup \{p\}$.

\item Let us now suppose that $\mathcal{P}_{i+1}$ is true, for $i < n'$. As $(p,m,X_i) \in u_i.\Omega$, according to rule $2$ of our algorithm, it implies that $u_i$ received $\Omega'$ from a node $v$, with $(p,m,X) \in \Omega'$, $v \notin X$ and $X_i = X \cup \{v\}$. Thus,
$|X| = |X_i| - 1 = |S\cup \{p\}| - (i+1)$.

As $v \in X_i$ and $X_i$ is a set of correct nodes, $v$ is correct and behaves according to our algorithm. Then, as $v$ sent $\Omega'$, according to rule $1$ of our algorithm, we necessarily have $\Omega' \subseteq v.\Omega$. Thus, as $(p,m,X) \in \Omega'$, we have $(p,m,X) \in v.\Omega$. Hence, $\mathcal{P}_{i+1}$ is true if we take $u_{i+1} = v$ and $X_{i+1} = X$.
\end{itemize}

By induction principle, $\mathcal{P}_{n'}$ is true.
As $|X_{n'}|  =  0$, $X_{n'} = \o$ and $(p,m,\o) \in u_{n'}$. As $u_{n'}$ is a correct node and follows our algorithm, the only possibility to have $(p,m,\o) \in u_{n'}.\Omega$ is that $u_{n'} = p$ and $m = p.m_0$. Thus, the result.
\end{proof}

\begin{lemma}[Communication]
\label{lemrel}
Let us suppose that $\DynMinCut{p,q} > 2k$, and that all correct nodes follow our algorithm. Then, we eventually have $(p,p.m_0) \in q.Acc$.
\end{lemma}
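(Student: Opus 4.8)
The plan is to prove the liveness (communication) property: under the assumption $\DynMinCut{p,q} > 2k$, node $q$ eventually accepts $p$'s message via rule 3. The key observation is that rule 3 fires for $(p,m_0)$ at $q$ as soon as $q.\Omega$ contains tuples $(p,m_0,S_i \cup \{p\})$ for a family $\{S_1,\dots,S_n\}$ whose \MinCut{} exceeds $k$. So I must show two things: first, that enough \emph{correct} dynamic paths from $p$ to $q$ actually deliver their visited-node sets into $q.\Omega$, and second, that the collection of sets so delivered has minimum cut strictly greater than $k$.

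First I would establish the delivery claim: for every dynamic path $(p,u_1,\dots,u_n,q)$ consisting entirely of correct nodes, the tuple $(p,m_0,\{u_1,\dots,u_n,p\})$ eventually appears in $q.\Omega$. This is a forward induction along the path mirroring the backward induction of Lemma~\ref{lemsafe}: $p$ starts with $(p,m_0,\o) \in p.\Omega$ and multicasts it (rule 1); by Definition~\ref{def_dyn} the edge $e_i = (u_i,u_{i+1})$ is present throughout the interval $[t_i, t_i+\zeta(e_i,t_i)]$, so the multicast reaches $u_{i+1}$, which by rule 2 appends $v$ to the visited set (here $v=u_i$ is not yet in the set since the path nodes are distinct), and any change to $u_{i+1}.\Omega$ triggers a further multicast by rule 1. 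Propagating this hop by hop, and using that each $\zeta(e_i,t_i) \le t_{i+1}-t_i$ guarantees timely reception, $q$ ultimately obtains the full set $\{u_1,\dots,u_n,p\} = S \cup \{p\}$ with $S = \{u_1,\dots,u_n\}$.

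Next I would take $\{S_1,\dots,S_n\}$ to be the family of all such node sets coming from \emph{Byzantine-free} dynamic paths from $p$ to $q$, i.e. the subfamily of \Dyn{p,q} whose members avoid the (at most $k$) Byzantine nodes. The crux is the cut bound: I must show $\MinCut{\{S_1,\dots,S_n\}} > k$. Suppose not; then some $C$ with $|C| \le k$ intersects every Byzantine-free path's node set. Let $B$ be the set of Byzantine nodes, $|B| \le k$. Then $C \cup B$ intersects \emph{every} dynamic path from $p$ to $q$: any path either contains a Byzantine node (hit by $B$) or is Byzantine-free (hit by $C$). Hence $C \cup B \in \Cut{\Dyn{p,q}}$ with $|C \cup B| \le 2k$, contradicting $\DynMinCut{p,q} > 2k$. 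Therefore the delivered correct-path family has minimum cut exceeding $k$, rule 3 fires at $q$, and $(p,p.m_0) \in q.Acc$.

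I expect the main obstacle to be the delivery induction rather than the cut argument: one must argue carefully that the repeated multicasts triggered by rule 1 (on every change of $u.\Omega$ or of the local topology) align in time with the presence intervals guaranteed by Definition~\ref{def_dyn}, so that each hop's message is genuinely transmitted before the edge disappears and is received by date $t_{i+1}$. The subtlety is that a correct node may multicast its set before it has received the relevant tuple, but since rule 1 re-multicasts on \emph{every} update to $u.\Omega$, the tuple is necessarily re-sent after it is appended, and the timing conditions $\zeta(e_i,t_i) \le t_{i+1}-t_i$ and the full-interval presence condition ensure it arrives in time; formalizing this ``eventual delivery along a dynamic path'' is where the care is needed. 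The cut-union step, by contrast, is a clean counting argument once the correct paths' contribution is isolated.
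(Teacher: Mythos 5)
Your proposal is correct and follows essentially the same route as the paper's proof: the same cut-union argument (adjoining the Byzantine set to a hypothetical small cut of the correct-path family to contradict $\DynMinCut{p,q} > 2k$), the same forward induction along each Byzantine-free dynamic path using the re-multicast trigger of rule~1 and the timing constraints of Definition~\ref{def_dyn}, and the same conclusion via rule~3. The timing subtlety you flag is exactly the one the paper resolves by having each hop multicast at the later of the tuple-arrival date and the edge-availability date.
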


\begin{proof}
Let $\{S_1,\dots,S_n\}$ be the set of node sets $S \in \Dyn{p,q}$ that only contain correct nodes. Similarly, let $\{X_1,\dots,X_{n'}\}$ be the set of node sets $X \in \Dyn{p,q}$ that contain at least one Byzantine node.

Let us suppose that $\MinCut{\{S_1,\dots,S_n\}} \leq k$. Then, there exists $C \in \Cut{\{S_1,\dots,S_n\}}$ such that $|C| \leq k$. Let $C'$ be the set containing the nodes of $C$ and the Byzantine nodes. Thus, and $C' \in \Cut{\{S_1,\dots,S_n\} \cup \{X_1,\dots,X_{n'}\}} = \Cut{\Dyn{p,q}}$, and $|C'| \leq 2k$. Thus, $\MinCut{\Dyn{p,q}} \leq 2k$, which contradicts our hypothesis. Therefore, $\MinCut{\{S_1,\dots,S_n\}} > k$.

In the following, we show that $\forall S \in \{S_1,\dots,S_n\}$, we eventually have $(p,p.m_0,S \cup \{p\}) \in q.\Omega$, ensuring that $q$ eventually accepts $p.m_0$ from $p$.

Let $S \in \{S_1,\dots,S_n\}$. As $S \in \Dyn{p,q}$, let $(u_1,\dots,u_N)$ be the dynamic path such that $p = u_1$, $q = u_N$ and $S = \{u_2,\dots,u_{N-1}\}$. Let $(t_1,\dots,t_N)$ be the corresponding dates, according to Definition~\ref{def_dyn}. Let us show the following property $\mathcal{P}_i$ by induction, $\forall i \in \{1,\dots,N\}$: at date $t_i$, $(p,p.m_0,X_i) \in u_i.\Omega$, with $X_i = \o$ if $i = 1$ and $\{u_1,\dots,u_{i-1}\}$ otherwise.

\begin{itemize}
\item $\mathcal{P}_1$ is true, as we initially have $(p,p.m_0,\o) \in p.\Omega$.
\item Let us suppose that $\mathcal{P}_i$ is true, for $i < N$. According to Definition~\ref{def_dyn}, $\forall t \in [t_i,t_i + \zeta(t_i,u_i)]$, $\rho(e_i,t) = 1$, $e_i$ being the edge connecting $u_i$ to $u_{i+1}$.

\begin{itemize}
\item Let $t_A \leq t_i$ be the earliest date such that, $\forall t \in [t_A,t_i + \zeta(t_i,u_i)]$, $\rho(e_i,t) = 1$. 
\item Let $t_B \leq t_i$ be the date where $(p,m,X_i)$ is added to $u_i.\Omega$.
\item Let $t_C = max(t_A,t_B)$.
\end{itemize}

Then, at date $t_C$, either $u_i.\Omega$ or the local topology topology of $u_i$ changes. Thus, according to rule $1$ of our algorithm, $u_i$ multicasts $\Omega' = u_i.\Omega$ at date $t_C$, with $(p,p.m_0,X_i) \in \Omega'$.

As $\zeta(e_i,t_i) \leq t_{i+1} - t_i \leq t_{i+1} - t_C$, $u_{i+1}$ receives $\Omega'$ from $u_i$ at date $t_C + \zeta(e_i,t_i) \leq t_{i+1}$. Then, according to rule $2$ of our algorithm, $(p,p.m_0, X_i \cup \{u_i\})$ is added to $u_{i+1}.\Omega$.

Thus, $\mathcal{P}_{i+1}$ is true if we take $X_{i+1} = X_i \cup \{u_i\}$.
\end{itemize}

By induction principle, $\mathcal{P}_{N}$ is true. As $u_1 = p$, $X_N = \{u_1,\dots,u_{N-1}\} = S \cup \{p\}$, and we eventually have $(p,p.m_0,S \cup \{p\}) \in q.\Omega$.

Thus, $\forall S \in \{S_1,\dots,S_n\}$, we eventually have $(p,p.m_0,S \cup \{p\}) \in q.\Omega$. Then, as $\MinCut{\{S_1,\dots,S_n\}} > k$, according to rule $3$ of our algorithm, $(p,p.m_0)$ is added to $q.Acc$.
\end{proof}

\begin{lemma}[Sufficient condition]
\label{lemsuf}
Let there be any dynamic graph.
Let $p$ and $q$ be two correct nodes, and $k$ denote the maximum number of Byzantine nodes. If $\DynMinCut{p,q} > 2k$, our algorithm ensures reliable communication from $p$ to $q$.
\end{lemma}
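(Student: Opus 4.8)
The plan is to observe that this statement is essentially a corollary of the two preceding lemmas: the substantive analytical work has already been carried out in Lemma~\ref{lemsafe} (Safety) and Lemma~\ref{lemrel} (Communication), so the only remaining task is to verify that together they yield exactly the two clauses of Definition~\ref{def_rel}. Accordingly, I would not attempt any fresh induction or cut-counting argument here; I would simply assemble the earlier results.

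First I would unpack Definition~\ref{def_rel}. Reliable communication from $p$ to $q$ decomposes into two properties: (i) a \emph{safety/authenticity} property, namely that whenever $q$ accepts a message $m$ from $p$ (i.e.\ $(p,m) \in q.Acc$), $p$ is genuinely the author, which in the formal language of the algorithm means $m = p.m_0$; and (ii) a \emph{liveness/delivery} property, namely that whenever $p$ broadcasts, $q$ eventually accepts that message from $p$ (i.e.\ $(p,p.m_0) \in q.Acc$ eventually holds). Both prerequisite lemmas are stated under the standing assumption that all correct nodes follow the algorithm, which is precisely what ``our algorithm ensures reliable communication'' asserts, so the hypotheses line up.

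For clause (i) I would invoke Lemma~\ref{lemsafe} directly: under the hypothesis that all correct nodes follow the algorithm, $(p,m) \in q.Acc$ forces $m = p.m_0$, so any message $q$ accepts from $p$ is exactly the one $p$ intended to broadcast. It is worth noting that this direction does not even use the cut hypothesis $\DynMinCut{p,q} > 2k$; safety holds unconditionally. For clause (ii) I would invoke Lemma~\ref{lemrel}, which is where the hypothesis $\DynMinCut{p,q} > 2k$ is actually consumed: it guarantees that $(p,p.m_0)$ is eventually added to $q.Acc$. Conjoining the two establishes both clauses of Definition~\ref{def_rel}, hence reliable communication.

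The main (and only mild) obstacle is bookkeeping rather than mathematics: I must make sure the informal phrasing of Definition~\ref{def_rel} --- ``$q$ accepts a message from $p$'' and ``$p$ is the author'' --- is correctly identified with the formal predicates $(p,m) \in q.Acc$ and $m = p.m_0$ used in the two lemmas, and that ``the message $p$ sends'' is indeed the broadcast value $p.m_0$. Once this translation is pinned down, the lemma follows by citing Lemmas~\ref{lemsafe} and~\ref{lemrel} with no further computation.
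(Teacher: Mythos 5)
Your proposal is correct and matches the paper's own proof exactly: the paper likewise assembles Lemma~\ref{lemsafe} for the authenticity clause and Lemma~\ref{lemrel} for the eventual-acceptance clause of Definition~\ref{def_rel}, with no additional argument. Your observation that the cut hypothesis is only consumed in the liveness half is a correct (if unstated in the paper) refinement.
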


\begin{proof}
Let us suppose that the correct nodes follow our algorithm, as described in Section~\ref{sec_algo}. First, according to Lemma~\ref{lemsafe}, if $(p,m) \in q.Acc$, then $m = p.m_0$. Thus, when $q$ accepts a message from $p$, $p$ is necessarily the author of this message. Then, according to Lemma~\ref{lemrel}, we eventually have $(p,p.m_0) \in q.Acc$. Thus, $q$ eventually receives and accepts the message broadcast by $p$. Therefore, according to Definition~\ref{def_rel}, our algorithm ensures reliable communication from $p$ to $q$.
\end{proof}

\section{Cryptographic reliable communication}
\label{sec_algo2}

If cryptography is available, then it becomes possible to authenticate the sender of a message across multiple hops.

The setting is now the following.
Each node $p$ has a private key $priv_p$ (only known by $p$)
and a public key $pub_p$ (known by all nodes).
The node $p$ can encrypt a message $m$ with the function $crypt(priv_p,m)$.
Any node $q$ can decrypt a message from $p$ with the function $decrypt(pub_p,m)$.
This function returns NULL if the message was not correctly encrypted.
We assume that the Byzantine nodes do not know the private keys of correct nodes.

Then, we modify the previous algorithm as follows.
Initially, $u.\Omega = \{(u,crypt(priv_u,u.m_0))\}$
and $u.Acc = \{(u,u.m_0)\}$. Then, each correct node $u$ obeys to the three following rules:

\begin{enumerate}
\item Initially, and whenever $u.\Omega$ or the local topology of $u$ change: multicast $u.\Omega$.

\item Upon reception of $\Omega'$ from a neighbor node:
$u.\Omega = u.\Omega \cup \Omega'$.

\item Whenever there exists $(s,m) \in u.\Omega$ such that $m' = decrypt(pub_s,m) \neq NULL$:
append $(s,m')$ to $u.Acc$.
\end{enumerate}

\begin{theorem}
\label{mainthm2}
If cryptography is available, for a given dynamic graph, 
a $k$-Byzantine tolerant reliable communication from $p$ to $q$ is feasible if and only if $\DynMinCut{p,q} > k$.
\end{theorem}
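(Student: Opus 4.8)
The plan is to prove Theorem~\ref{mainthm2} by mirroring the structure of the non-cryptographic case, adapting both directions of the proof to the weaker threshold $\DynMinCut{p,q} > k$. For the necessary condition (``only if''), I would argue by contraposition: suppose $\DynMinCut{p,q} \leq k$, so there exists a cut $C \in \Cut{\Dyn{p,q}}$ with $|C| \leq k$. The adversary declares all of $C$ Byzantine, which is permissible since $|C| \leq k$. Because $C$ intersects every dynamic path from $p$ to $q$, all information reaching $q$ about $p$ is mediated by $C$. The crucial difference from the non-cryptographic case is that here a single cut of size $\leq k$ suffices to break reliability, rather than needing to split a cut of size $\leq 2k$ into two halves. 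The reason is that the Byzantine nodes in $C$ cannot \emph{forge} the signature $crypt(priv_p, m')$ for a message $m' \neq m_0$ that $p$ never sent, but they \emph{can} suppress all traffic, so that $q$ never accepts anything. Thus if $p$ broadcasts, $q$ must eventually accept by the liveness requirement, yet the nodes of $C$ can simply drop every message, preventing any accept and violating reliable communication. This establishes that $\DynMinCut{p,q} > k$ is necessary.

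For the sufficient condition (``if''), I would split into a safety lemma and a communication (liveness) lemma, analogous to Lemmas~\ref{lemsafe} and~\ref{lemrel}. Safety is now almost immediate and is where cryptography does the heavy lifting: by rule~3 of the modified algorithm, $(p, m') \in q.Acc$ only if some $(p, m) \in q.\Omega$ with $m' = decrypt(pub_p, m) \neq NULL$. Since a valid signature under $pub_p$ can only be produced by a holder of $priv_p$, and the Byzantine nodes do not know $priv_p$, the only way such a correctly-signed tuple can exist is if $p$ itself created it, forcing $m' = p.m_0$. This replaces the entire inductive disjoint-path argument of Lemma~\ref{lemsafe} with an unforgeability argument, and notably it holds regardless of the cut size, which is precisely why the threshold drops from $2k$ to $k$.

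For the liveness half, I would adapt Lemma~\ref{lemrel}. Assuming $\DynMinCut{p,q} > k$, let $\{S_1,\dots,S_n\}$ be the dynamic paths in $\Dyn{p,q}$ containing only correct nodes; since the at-most-$k$ Byzantine nodes can form a cut hitting only the Byzantine-containing paths, and since the full $\DynMinCut$ exceeds $k$, there must remain at least one entirely correct dynamic path from $p$ to $q$. Along that path I would run the same inductive propagation argument as in Lemma~\ref{lemrel}, using rules~1 and~2 (noting that in the cryptographic algorithm rule~2 simply unions the received set, with no visited-set bookkeeping) together with Definition~\ref{def_dyn}, to show the signed tuple $(p, crypt(priv_p, p.m_0))$ travels hop-by-hop and eventually reaches $q.\Omega$. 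Rule~3 then fires and $q$ accepts $p.m_0$. Combining safety and liveness with Definition~\ref{def_rel} gives reliable communication. The main obstacle, and the only genuinely delicate point, is arguing that an entirely-correct dynamic path survives: one must verify that removing the (at most $k$) Byzantine nodes cannot disconnect $p$ from $q$, which follows because such a removal would otherwise be a cut of size $\leq k$ in $\Cut{\Dyn{p,q}}$, contradicting $\DynMinCut{p,q} > k$; the rest reduces to the signature-unforgeability assumption and a routine re-run of the earlier induction.
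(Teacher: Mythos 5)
Your proposal is correct and follows essentially the same route as the paper: the necessity argument cuts all dynamic paths with at most $k$ Byzantine nodes that drop traffic, safety reduces to unforgeability of $crypt(priv_p,\cdot)$, and liveness extracts an all-correct dynamic path (which must exist, else the Byzantine set would be a cut of size at most $k$) and reruns the propagation induction of Lemma~\ref{lemrel}. If anything, you are slightly more explicit than the paper's own proof in justifying why an entirely correct dynamic path survives.
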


\begin{proof}
If $\DynMinCut{p,q} \leq k$, then it is possible to cut all dynamic paths between $p$ and $q$ with Byzantine nodes. Thus, $q$ never receives any message from $p$. Thus, the condition is necessary. Now, let us show that the condition is sufficient.

First, $q$ cannot accept a message $(p,m)$ with $m \neq p.m_0$. Indeed, let us suppose the opposite.
According to step $3$ of the algorithm,
it implies that we have $(p,m') \in q.\Omega$, with
$decrypt(pub_p,m') = m$.
Implying that $m' = crypt(priv_p,m)$.
Let $v$ be the first node to have $(p,m') \in v.\Omega$.
According to steps $1$ and $2$ of the algorithm, $v$ cannot be a correct node.
Thus, $v$ is Byzantine, implying that a Byzantine node knows $priv_p$: contradiction.

Besides, if $\DynMinCut{p,q} > k$, then there exists at least one dynamic path from $p$ to $q$.
Thus, for the same argument as in Lemma~\ref{lemrel},
we eventually have $(p,crypt(priv_p,p.m_0)) \in q.\Omega$.
Thus, according to step $3$ of the algorithm, $(p,p.m_0)$ is added to $q.Acc$, and the condition is sufficient.

\end{proof}

\section{Case Studies}
\label{sec_case}

In this section, we apply our conditions for reliable communication to several case studies:
participants interacting in a conference,
robots moving on a grid and agents moving in the subway.
We show the interest of multihop reliable communication.

\subsection{A real-life dynamic network: the Infocom 2005 dataset}
\label{determ_infocom}

In this section, we consider the Infocom 2005 dataset \cite{infocom_dataset}, which is obtained in a conference scenario by iMotes capturing contacts between participants. This dataset can represent a dynamic network where each participant is a node and where each contact is a (temporal) edge.

We consider an 8-hour period during the second day of the conference. In this period, we consider the dynamic network formed by the 10 most ``sociable'' nodes (our criteria of sociability is the total number of contacts reported). We assume that at most one on these nodes may be Byzantine (that is, $k = 1$).

Let $p$ and $q$ be two correct nodes. Let us suppose that $p$ wants to transmit a message to $q$ within a period of $10$ minutes. Within $10$ minutes, three types of communication can be achieved: 

\begin{itemize}

\item \emph{Direct} communication: $p$ meets $q$ directly.

\item \emph{Non-cryptographic} communication: the condition for reliable non-cryptographic
communication (Theorem~\ref{mainthm}) is satisfied.

\item \emph{Cryptographic} communication: the condition for reliable cryptographic communication
(Theorem~\ref{mainthm2}) is satisfied.

\end{itemize}

If we want to ensure reliable communication despite one Byzantine node, the simplest strategy is to wait until $p$ meets $q$ directly. Let us show now that relaying the message is usually beneficial and that our approach realizes a significant gain of performance.

\begin{figure*}
\begin{center}
\includegraphics[width=17cm]{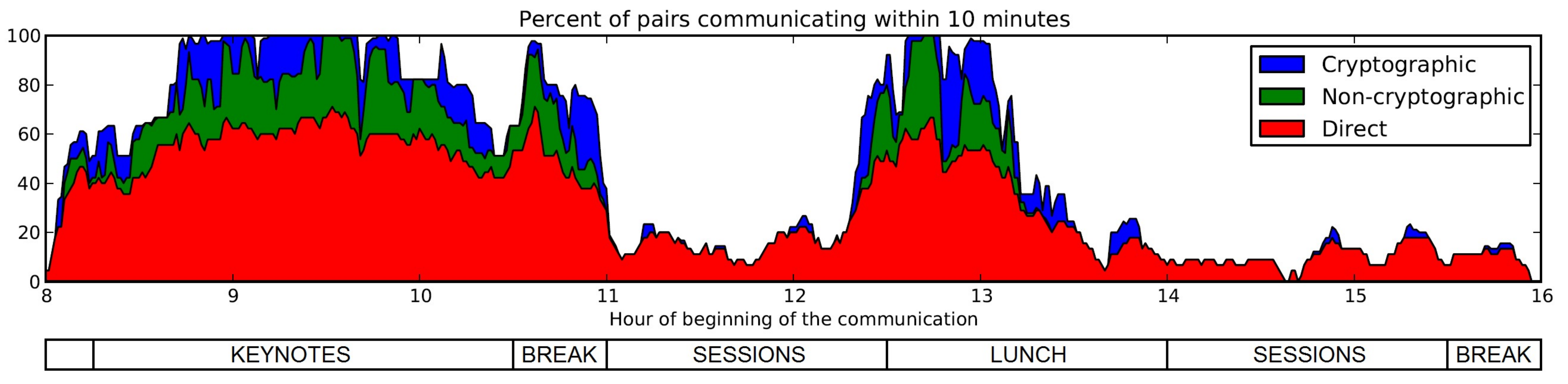}
\caption{Reliable communication between $10$ most sociable nodes of the Infocom 2005 dataset} 
\label{fig:infocom}
\end{center}
\end{figure*}

Figure~\ref{fig:infocom} represents the percentage of pairs of nodes $(p,q)$ that communicate within $10$ minutes, according to the date of beginning of the communication. We can correlate the peaks with the program of the conference: the first period corresponds to morning arrivals during the keynotes; the peak between 10:30 and 11:00 corresponds to the morning break; the peak starting at 12:30 corresponds to the end of parallel sessions and the departure for lunch.

As it turns out, many pairs of nodes are able to communicate reliably, even though they are unable to meet directly. For instance, at 9:15,
60\% of pairs of nodes meet directly, 
80\% can communicate reliably without cryptography, and 
100\% can communicate reliably with cryptography.
This means that relaying the information is actually effective and desirable.

\subsection{Probabilistic mobile robots on a grid}
\label{part_robot}

We consider a network of $10$ mobile robots that are initially randomly scattered on a $10 \times 10$ grid.

\begin{definition}[Grid]
An $N \times N$ \emph{grid} is a topology such that:
\begin{itemize}
\item Each vertex has a unique identifier $(i,j)$, with
$1 \leq i \leq N$ and $1 \leq j \leq N$.
\item Two vertices $(i_{1},j_{1})$ and $(i_{2},j_{2})$ are neighbors if and only if:
$
	\lvert j_{1}-j_{2} \rvert + \lvert i_{1}-i_{2} \rvert = 1
$
\end{itemize}
\end{definition}

At each time unit, a robot randomly moves to a neighbor vertex, or does not move (the new position is chosen uniformly at random among all possible choices). Let $position(u,t)$ be the current vertex of the robot $u$ at date $t$. We consider that two robots can communicate if and only if they are on the same vertex. Our setting induces the following dynamic graph $\mathcal{G} = (V,E,\rho,\zeta)$:
$V = \{u_1,\dots,u_{10}\}$,
$E = V \times V$,
$\rho((u,v),t) = 1$ when $position(u,t) = position(v,t)$ and
$\zeta((u,v),t) = 0$.

Let $p$ and $q$ be two correct robots, and suppose that up to $k$ other robots are Byzantine. We aim at evaluating the \emph{communication time}, that is:
the mean time to satisfy the condition for reliable communication with cryptography
(Theorem~\ref{mainthm2})
and without cryptography
(Theorem~\ref{mainthm}).
For this purpose, we ran more than 10000 simulations, and represented the results on
Figure~\ref{fig:mean_rnc} and \ref{fig:mean_rc}. Let us comment on these results.

\begin{figure}
\begin{center}
\includegraphics[width=8cm]{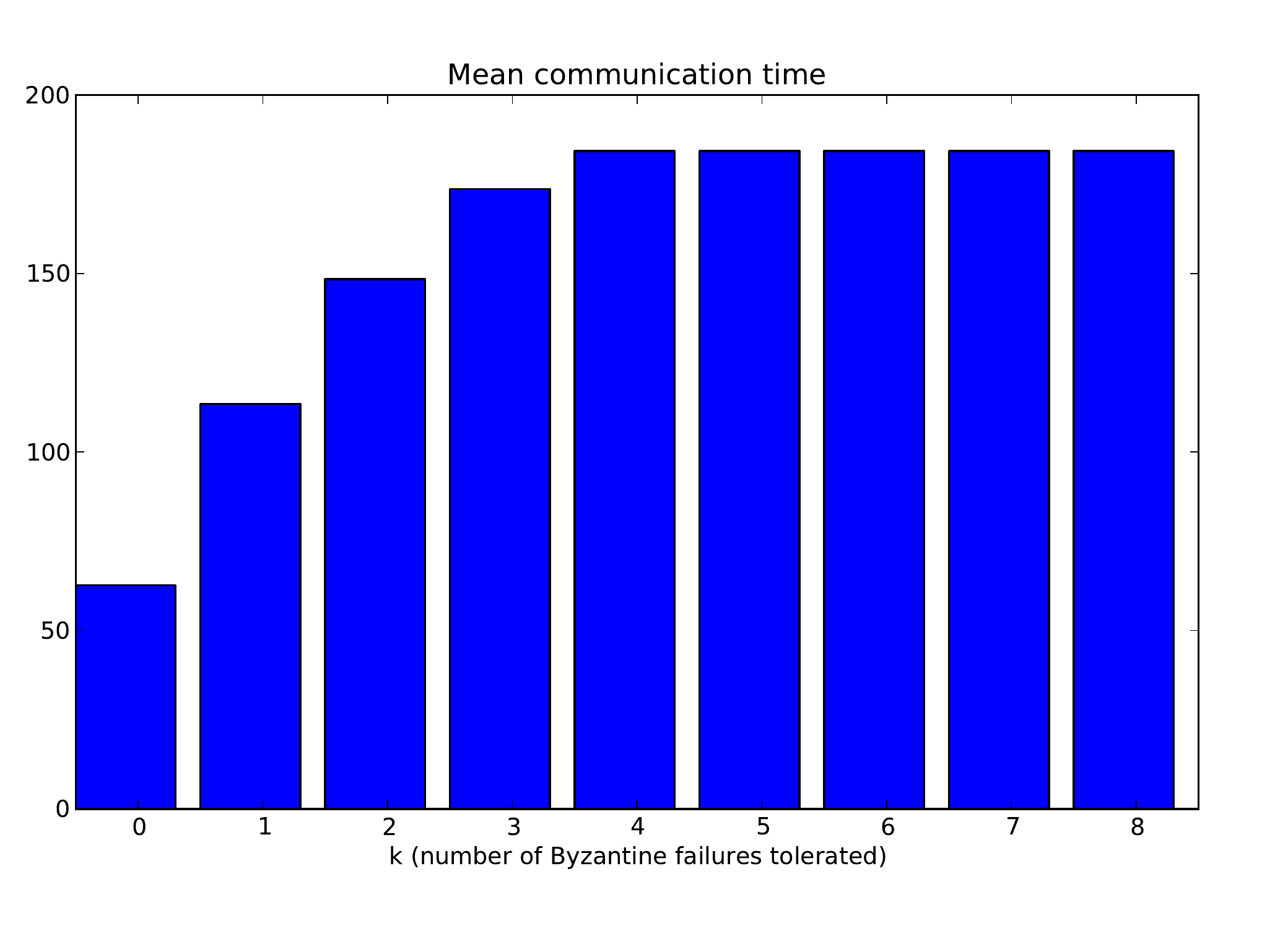}
\caption{Mean communication time without cryptography (robots)} 
\label{fig:mean_rnc}
\end{center}
\end{figure}

\begin{figure}
\begin{center}
\includegraphics[width=8cm]{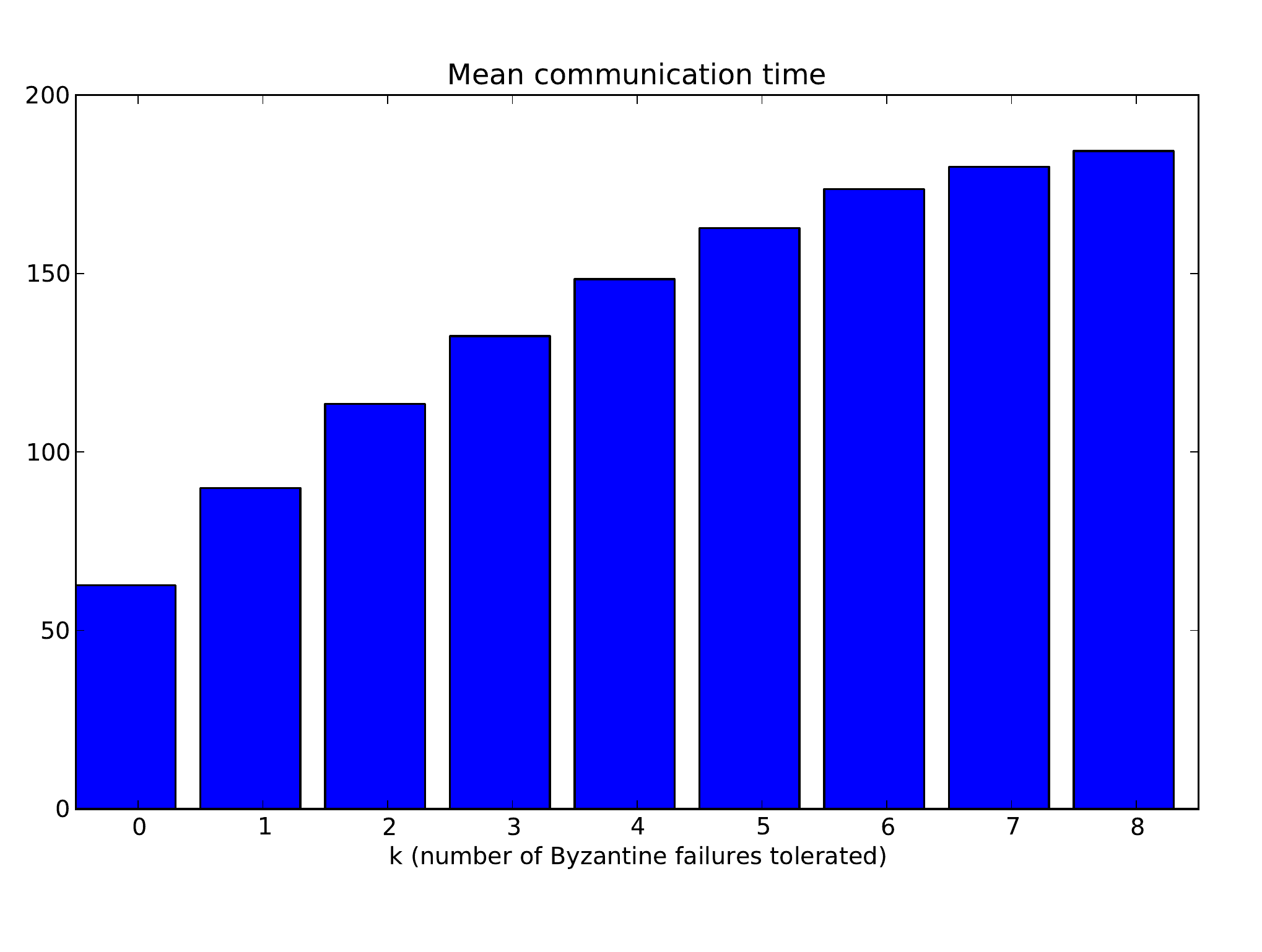}
\caption{Mean communication time with cryptography (robots)} 
\label{fig:mean_rc}
\end{center}
\end{figure}

In Figure~\ref{fig:mean_rnc}, we represented the mean communication time varying the maximal number of Byzantine failures $k$ when cryptography is available. This time increases regularly.
The case $k=8$ corresponds to the case where all the nodes (except $p$ and $q$) are Byzantine.
In this limit case, the only possibility for $p$ and $q$ to communicate is to meet directly.

In Figure~\ref{fig:mean_rnc}, we represented the case where cryptography is not available. Here, the aforementioned limit case is reached for $k = 4$, as the condition for non-cryptographic reliable communication is harder to satisfy.

As we can see, the reliable multihop communication approach can be an interesting compromise. 
For instance, let us suppose that we want to tolerate one Byzantine failure ($k = 1$).
Let us consider the mean time for $p$ and $q$ to meet directly.
If we use our algorithms, this time decreases by
38\% without cryptography,
and by 51\% with cryptography.

\subsection{Mobile agents in the Paris subway}
\label{part_subway}

We consider a dynamic network consisting of 10 mobile agents randomly moving in the Paris subway. The agents can use the classical subway lines (we exclude tramways and regional trains). Each agent is initially located at a randomly chosen junction station -- that is, a station that connects at least two lines. Then, the agent randomly chooses a neighbor junction station, waits for the next train, moves to this station, and repeats the process. We use the train schedule provided by the local subway company (\texttt{http://data.ratp.fr}). The time is given in minutes from the departure of the first train (\emph{i.e.}, around 5:30). We consider that two agents can communicate in the two following cases:

\begin{enumerate}

\item They are staying together at the same station.

\item They cross each other in trains. For instance, if at a given time, one agent is in a train moving from station $A$ to station $B$ while the other agent moves from $B$ to $A$, then we consider that they can communicate.

\end{enumerate}

\begin{figure}
\begin{center}
\includegraphics[width=8cm]{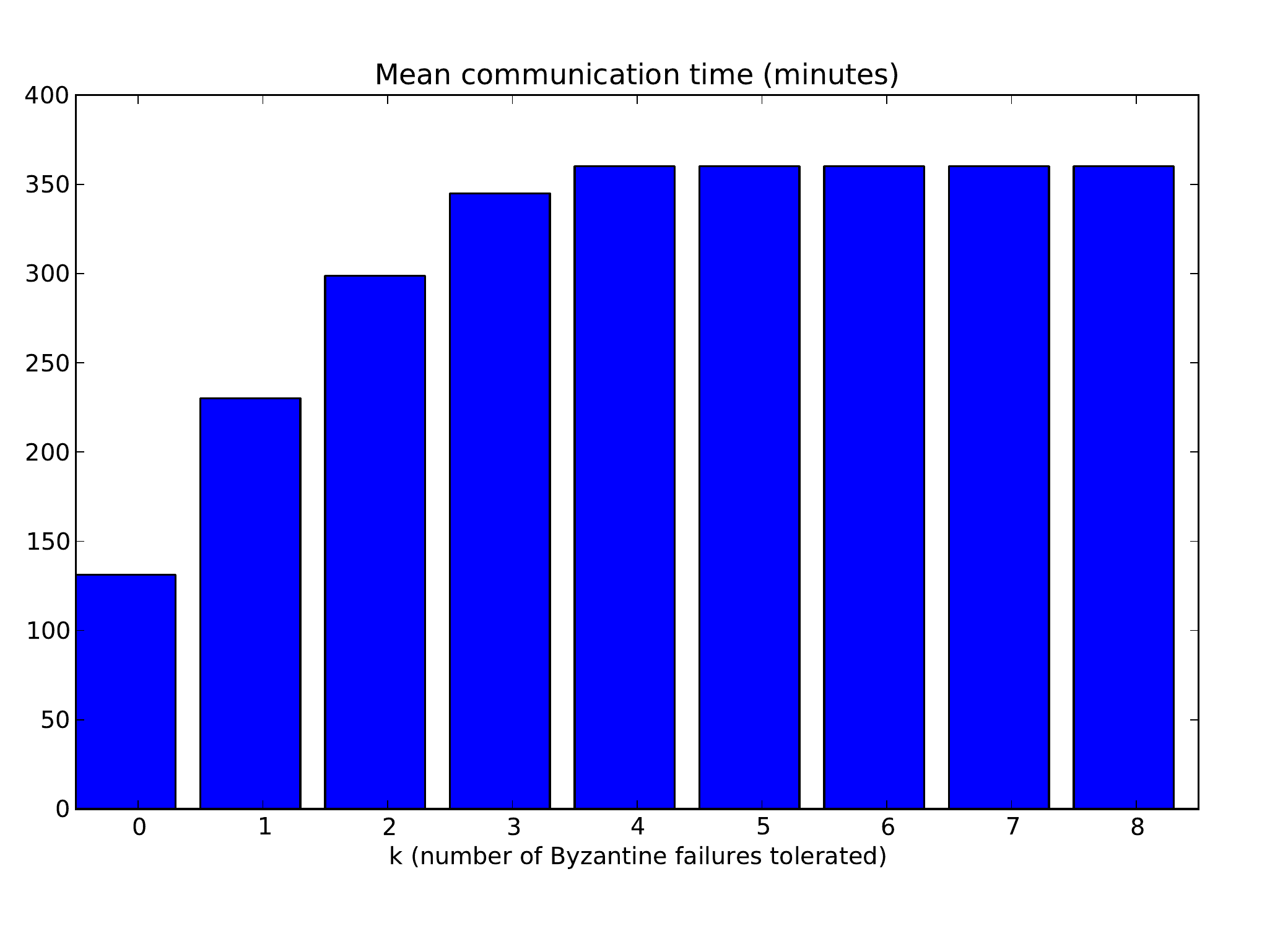}
\caption{Mean communication time without cryptography (subway)} 
\label{fig:mean_mnc}
\end{center}
\end{figure}

\begin{figure}
\begin{center}
\includegraphics[width=8cm]{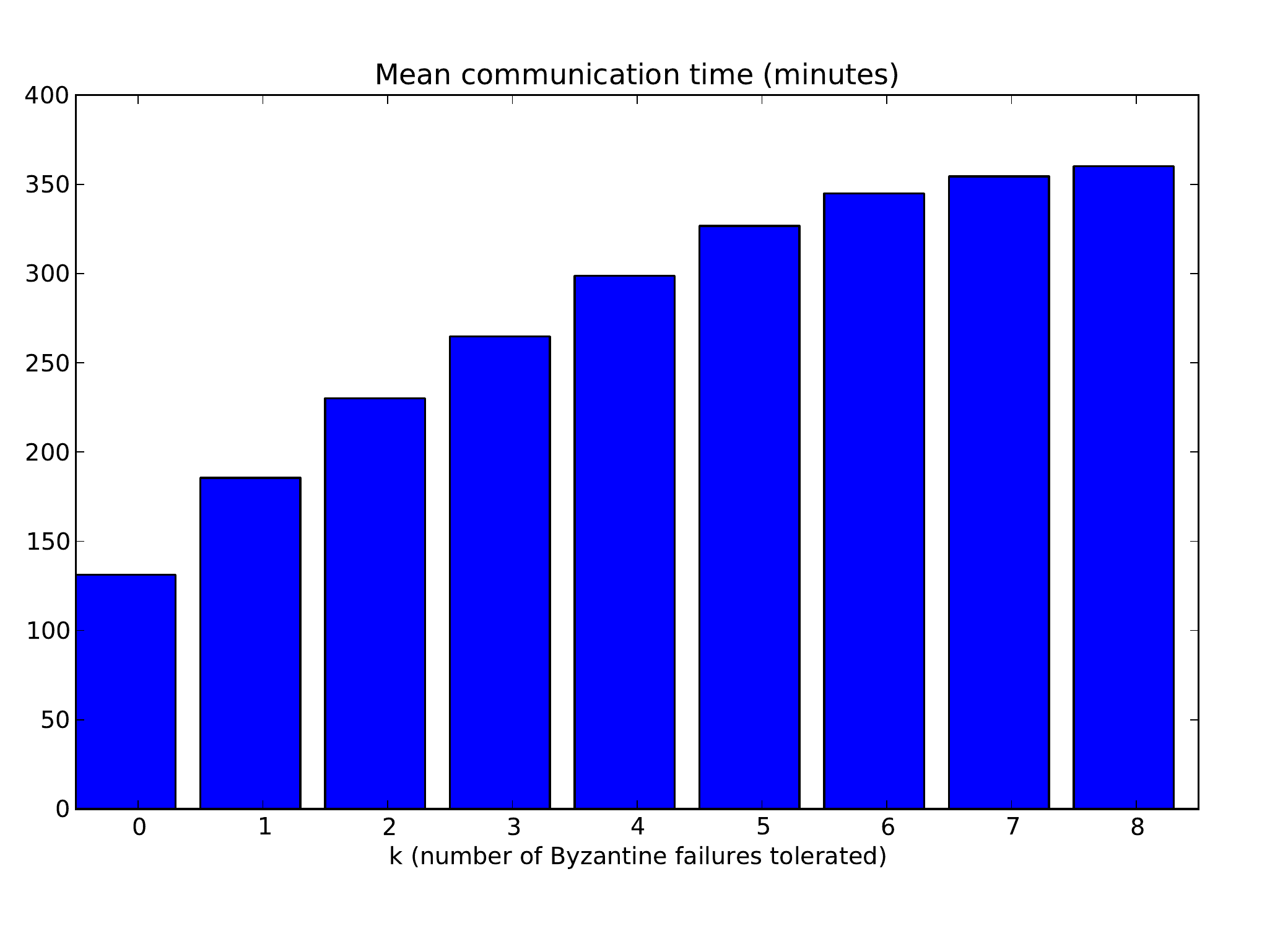}
\caption{Mean communication time with cryptography (subway)} 
\label{fig:mean_mc}
\end{center}
\end{figure}

Similarly to the previous case study,
we represented the mean communication time with and without cryptography (see Figure~\ref{fig:mean_mnc} and \ref{fig:mean_mc}). The qualitative observations are the same.

Again, let us suppose that we want to tolerate one Byzantine failure ($k = 1$).
Let us consider the mean time for $p$ and $q$ to meet directly.
If we use our algorithms, this time decreases by
36\% without cryptography,
and by 49\% with cryptography.

\section{Conclusion}
\label{sec:conclusion}

In this paper, we gave the necessary and sufficient condition for reliable communication in a dynamic multihop network that is subject to Byzantine failures. We considered both cryptographic and non-cryptographic cases, and provided algorithms to show the sufficient condition. We then demonstrated the benefits of these algorithms in several case studies.

Our experiments explicitly quantify the benefits of a cryptographic infrastructure (fewer dynamic paths are required, less computations are necessary at each node for accepting genuine messages), but additional tradeofs are worth examining. In practice, ensuring hop by hop integrity through cryptography requires every node on the (dynamic) path to collect the public key of the sender (as it is unlikely that all public keys are initially bundled into the node, for memory size reasons and inclusion/exclusion node dynamics). Actually reaching a trusted authority from a guenuinely dynamic network to obtain this public key raises both bootstrapping and performance issues.

Our result implicitly considers a worst-case placement of the Byzantine nodes, which is the classical approach when studying Byzantine failures in a distributed setting. Studying variants of the Byzantine node placement (\emph{e.g.} a random placement according to a particular distribution), and the associated necessary and sufficient condition for enabling multihop reliable communication, constitutes an interesting path for future research.

\bibliographystyle{plain}
\bibliography{biblio}

\end{document}